\newtheorem{theorem}{Theorem}
\newtheorem{corollary}{Corollary}
\def\Reals{\mathop{\hbox{\mit I\kern-.2em R}}\nolimits}
\def\RA{\mathop{\hbox{\Rightarrow\kern-.2em /}}}
\def\Complexes{\mathop{\hbox{\mit C\kern-.44em
               \vrule depth 0ex height 1.4ex width .06em
               \kern.41em}}\nolimits}
\def\Zeals{\mathop{\hbox{\mit Z\kern-.29em Z}}\nolimits}
\def\Neals{\mathop{\hbox{\mit I\kern-.2em N}}\nolimits}
\def\K{\mathop{\hbox{\mit I\kern-.2em K}}\nolimits}
\def\F{\mathop{\hbox{\mit I\kern-.2em F}}\nolimits}
\def\sqr#1#2{{\vcenter{\vbox{\hrule height.#2pt
  \hbox{\vrule width.#2pt height#1pt \kern#1pt
    \vrule width.#2pt}
   \hrule height.2pt}}}}
\def\rddots{\mathinner{\mkern1mu\raise1pt\vbox{\kern7pt\hbox{.}}\mkern2mu\raise4pt\hbox{.}\mkern2mu\raise7pt\hbox{.}\mkern1mu}}
\begin{document}

\title{\sc Lattice Erasure  Codes of Low Rank with Noise Margins}
\author{\IEEEauthorblockN{Vinay A. Vaishampayan}\\
\IEEEauthorblockA{Dept. of Engineering Science and Physics\\City University of New York-College of Staten Island\\Staten Island, NY USA}
}

\maketitle
\begin{abstract}
 We consider the following generalization of an $(n,k)$ MDS code  for application to an erasure channel with additive noise. Like an MDS code, our code is required to be decodable from any $k$ received symbols, in the absence of noise. In addition, we require that the noise margin for every allowable erasure pattern be as large as possible and that the code satisfy a power constraint. In this paper we derive performance bounds and present a few designs for low rank lattice codes for an additive noise channel with erasures.  

 \end{abstract}
 
 {\small \textbf{\textit{Index terms}} Lattices, Erasure Codes, MDS Codes, Compound Channel.}
\section{Introduction}
 \begin{figure}[h] 
   \centering
   \includegraphics[width=7cm]{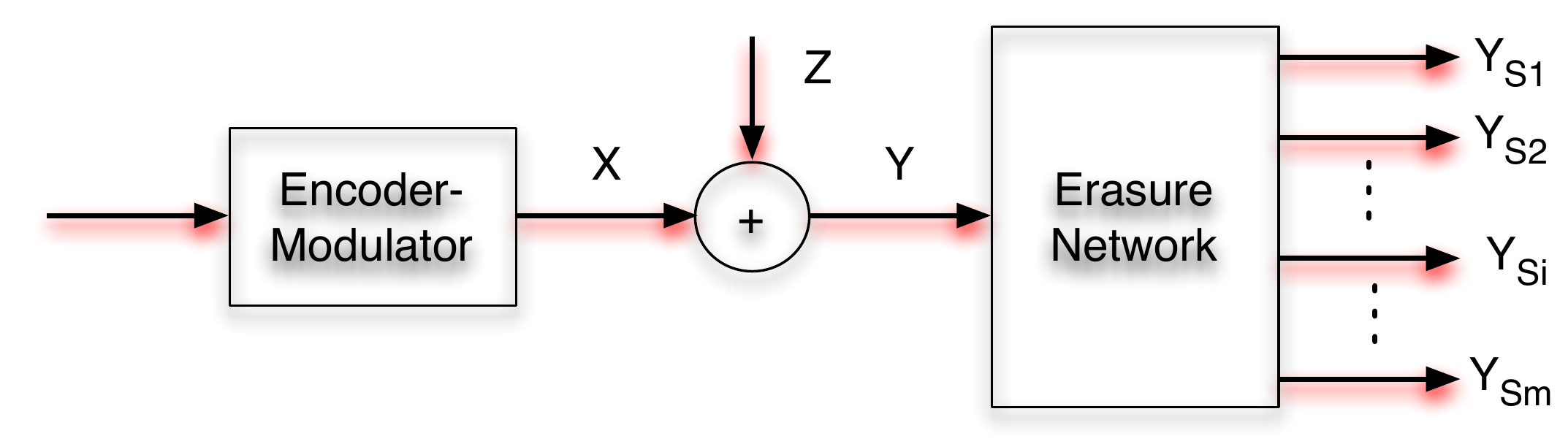} 
   \caption{Coding and Modulation for the Erasure Network.}
   \label{fig:enetwork}
\end{figure}

We consider  low rank lattice codes for transmission over a noisy erasure channel as  illustrated in Fig.~\ref{fig:enetwork}. In this figure $k$ information symbols are mapped by an  encoder/modulator to a vector $x=(x_1,x_2,\ldots,x_n) \in  \Lambda$, where $\Lambda$ is a rank-$k$ lattice in $\mathbb{R}^n$. The output of the additive noise channel is $y=x+z$, where $z=(z_1,z_2,\ldots,z_n)$ is a noise vector independent  of $x$ and with independent components.   Components of $y$ are then erased by an erasure network, whose outputs  are obtained by retaining only those symbols of $y$  indexed by subsets $S\subset \{1,2,\ldots,n\}$ in a given sub-collection of subsets; thus $y_S$ coincides with $y$ is the positions identified by $S$. As an example, with $n=4$, $S=\{2,4\}$ and $y=(a,b,c,d)$, $y_S=(b,d)$. A decoder estimates the source symbols based on  $y_S$ with a probability of error denoted $P_e(S)$. The objective is to minimize $P_e(S)$ for each $S$ by designing a single codebook which satisfies a power constraint $E [X^tX] \leq nP$, where $E$ denotes expectation with respect to a uniform distribution on the codebook. Here we consider as our sub-collection $\mathcal S$, \emph{all $k$-subsets} of $\{1,2,\ldots,n\}$. Our paper is organized as follows. Prior work and lattice background is in Sec.~\ref{sec:previous}. Two performance bounds are presented in Sec.~\ref{sec:bounds}, constructions for codes in dimension $n=4$ are presented and compared to the derived bounds  in Sec.~\ref{sec:fourdimcodes}. A summary is  in Sec.~\ref{sec:summary}.

We use the acronym w.l.o.g to mean `without loss of generality'.

\section{Prior Work and Review of Lattice Terminology}
\label{sec:previous}
The problem considered here may be viewed as a  code design problem for a special case of the \emph{compound channel}, see e.g.  ~\cite{blackwell1959capacity}, \cite{csiszar1991capacity}. This work was motivated by a study on cross layer coding that appeared in~\cite{courtade2011optimal}. For prior contributions on the Gaussian erasure channel, please refer to \cite{ozccelikkale2014unitary} and the references therein.We now develop  notation and some basic definitions for low rank lattices in $\mathbb{R}^n$.
Let $\{\phi_i,~i=1,2,\ldots,k\}$ be a collection of $k \leq n$ orthonormal column vectors in $\mathbb{R}^n$ and let $\Phi=(\phi_i,~i=1,2,\ldots,k)$ denote the associated  $n\times k$ orthonormal matrix. Let $V$ denote a $k \times k$ generator matrix of full rank for a lattice $\Lambda_V=V\mathbb{Z}^k:=\{Vu,~u\in \mathbb{Z}^k\}$.  We will refer to $\Lambda_V$ as the \emph{mother} lattice. Let 
\begin{equation}
\Lambda=\Phi \Lambda_V:=\Phi V \mathbb{Z}^k.
\label{eqn:defLambda}
\end{equation}
$\Lambda$ is a rank-$k$ lattice in $\mathbb{R}^n$. 
Let $G(\Lambda_V)=V^tV$ denote the Gram matrix of $\Lambda_V$ ($^t$ is the transpose operator). 

The \emph{determinant} of a lattice $\det \Lambda$ is defined in terms of the determinant of its Gram matrix by
$
\det \Lambda:=det(G(\Lambda)).
$
Let   $\rho(\Lambda)$   denote the radius of the largest inscribed sphere in a Voronoi cell of $\Lambda$. The packing density of $\Lambda$ is defined in terms of $V_k$ the volume of a unit-radius Euclidean ball in $\mathbb{R}^k$ by
\begin{equation}
\Delta_k(\Lambda)=V_k \rho^k/\sqrt{\det\Lambda}.
\end{equation}
We denote by $\Delta_k(opt)$, the largest packing density that can be acheived by any lattice in $\mathbb{R}^k$. The problem of finding lattices that maximize the packing density is a classical problem in number theory and geometry, with several excellent references~\cite{SPLAG},~\cite{gruber1987geometry}.

The following definitions are from~\cite{gruber1987geometry}.  A body captures the notion of a solid subset of $\mathbb{R}^n$, specifically,   $B\subset \mathbb{R}^n$ is a \emph{body} if it has nonempty interior and is contained in the closure of its interior~\cite{gruber1987geometry}. 
A body $B\subset \mathbb{R}^n$ is said to be \emph{centrally symmetric} if $B=-B$, where $-B=\{-x~:~x \in B\}$. A closed body $B$ with the property that for any $x \in B$, the point $\lambda x \in B$ for every $0 \leq \lambda < 1$ is called a \emph{star body}.  While convex bodies are star bodies, the converse is not true. A simple example, and one directly relevant to us is the star body formed by the union of centrally symmetric ellipsoids in $\mathbb{R}^n$. A lattice $\Lambda$ is said to be \emph{admissible} for $B \subset \mathbb{R}^n$, or $B$-admissible, if no non-zero point in $\Lambda$ lies in  $B$. The greatest lower bound of $\sqrt(\det \Lambda)$ over all $B-$admissible lattices is called the lattice constant of $B$, denoted $\Delta(B)$ (which is set to $\infty$ is there are no $B-$ admissible lattices).  A  $B$-admissible lattice $\Lambda$ with $\det \Lambda=\Delta(B)^2$ is said to be a \emph{critical} lattice for $B$.  

A lattice $\Lambda$ is said to be a \emph{packing lattice} for a body $B$ if the sets $B$ and $B+\lambda$ are disjoint for all non-zero $\lambda \in \Lambda$. It is known, Thm.1, Ch. 3, Sec 20 \cite{gruber1987geometry}, that $\Lambda$ is a packing lattice for centrally symmetric, convex body $B$ if and only if it is admissible for $2B$. Thus, for a convex body, the problem of finding a packing lattice for $B$ is equivalent to that of finding an admissible lattice for $2B$. The connection between packings and admissibility for non-convex bodies is messier.  The distinction arises because for a centrally symmetric body $B$, $\Lambda$ is  a lattice packing of $B$ if and only if it is admissible for $B+B$, where $+$ denotes the set sum or Minkowski sum. If the centrally symmetric body is also convex, then $B+B=2B$ and thus packing problems and admissibility problems are closely related. On the other hand, if $B$ is centrally symmetric but non-convex, in order to solve a packing problem for $B$ one must solve an admissibility problem for $B+B$, and this set may not be as easily described as $B$.

 In our application, we need to index body $B$ by subset $S$ in a given sub-collection of subsets and our problem is one of packing $\bigcup_{S\in {\mathcal S}}B(S)$, which is non-convex. While admissibility for non convex centrally symmetric body  $2C$ says nothing in general about packings for $C$, it turns out that  our design problem  is equivalent to finding a critical lattice for $2\bigcup_{S\in {\mathcal S}}B(S)$ because the decoder  knows $S$.  Thus it is possible to draw on the theory of admissible lattices for star bodies. This theory provides several key ingredients to help find a solution to this problem.  Most notably,  in the chapter on Mahler's compactness theorem~\cite{cassels2012introduction}, Theorem VII states that \emph{every critical lattice for a bounded star body $\mathcal S$ has $n$ linearly independent points on the boundary of $S$.}  
 
\section{Bounds}
\label{sec:bounds}
Let $\mathcal{I}_n=\{1,2,\ldots,n\}$. For $S \subset \mathcal{I}_n$, $|S|=k$ let $\Lambda_S$ be the lattice obtained be retaining only those coordinates that are in $S$ or equivalently $\Lambda_S$ is the projection of $\Lambda$ into the subspace ${\mathcal C}_S :=\mbox{Span}\{e_i,~i\in S\}$ where  $e_i=(0,...,0,1,0,...,0)^t$ is the $i$th unit vector in $\mathbb{R}^n$. For any k-subset $S\subset \{1,2,\ldots,n\}$, we denote by $\Phi_S$ the $k \times k $ submatrix obtained by extracting from $\Phi$ the $k$ rows identified by $S$. The generator matrix for $\Lambda_S$ is $\Phi_S V$ and its Gram matrix 
$G(\Lambda_S)=V^t\Phi_S^t \Phi_SV$. 

Define the (packing volume) contraction ratio
\begin{equation}
\beta_S=(\rho(\Lambda_S)/\rho(\Lambda_V))^k
\end{equation}
let $\rho_{min}=\min_S \rho(\Lambda_S)$ and let $\beta_{min}=\min_S \beta_S$.

\subsection{Determinant Upper Bound}
We will use symbols $\bar{x}$, $x^\#$ to denote the arithmetic mean and geometric mean, respectively, of the real numbers
$x_i$ over some index set $\mathcal I$.
When a $k$-dim mother lattice $\Lambda_V$ is set in $\Reals^n$ using a basis $\Phi$, the projections on the ${n \choose k}$ subsets $S$, cannot all be simultaneously good. There are two important factors that measure the `goodness' of the projections---the packing density and the scale of the lattices $\Lambda_S$. The following theorem develops one of two bounds presented in this paper.
\begin{theorem} ({\bf Determinant Bound})
Given a mother lattice $\Lambda_V$ and orthonormal basis $\Phi$, let $\beta^\#$ and $\Delta^\#$ be respectively, the geometric mean  of the volume contraction ratios and packing densities of the child lattices $\Lambda_S$, taken  over all $k$-subsets of $\{1,2,\ldots,n\}$. Then
\begin{equation}
(\beta^\# \Delta(\Lambda_V))^2 \leq \frac{(\Delta^\#)^2}{{n \choose k}}.
\label{eqn:DetBoundTh}
\end{equation}
Equality holds if and only if all child lattices have equal determinants.
\end{theorem}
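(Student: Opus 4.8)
The plan is to reduce the statement to a single per-subset scalar identity relating the packing density of each child lattice $\Lambda_S$ to that of the mother lattice $\Lambda_V$, and then to apply the arithmetic--geometric mean inequality in a form whose normalization is fixed by the Cauchy--Binet identity for the orthonormal matrix $\Phi$.

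First I would compute the determinant of each child lattice. Since $\Phi_S V$ is a square $k\times k$ matrix, $\det\Lambda_S = \det(V^t\Phi_S^t\Phi_S V) = (\det\Phi_S)^2(\det V)^2 = (\det\Phi_S)^2\det\Lambda_V$, so that $\sqrt{\det\Lambda_S} = |\det\Phi_S|\sqrt{\det\Lambda_V}$. Substituting this together with $\rho(\Lambda_S)^k = \beta_S\,\rho(\Lambda_V)^k$ into the definition $\Delta_k(\Lambda)=V_k\rho^k/\sqrt{\det\Lambda}$ gives the clean identity
$$\Delta_k(\Lambda_S) = \frac{\beta_S}{|\det\Phi_S|}\,\Delta_k(\Lambda_V),$$
equivalently $\beta_S\,\Delta_k(\Lambda_V)/\Delta_k(\Lambda_S) = |\det\Phi_S|$. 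Taking the geometric mean over all ${n \choose k}$ subsets $S$ then turns this into $\beta^\#\,\Delta_k(\Lambda_V) = \Delta^\#\bigl(\prod_S|\det\Phi_S|\bigr)^{1/{n \choose k}}$.

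The crucial ingredient is that the columns of $\Phi$ are orthonormal, so $\Phi^t\Phi = I_k$, and the Cauchy--Binet formula yields $\sum_S(\det\Phi_S)^2 = \det(\Phi^t\Phi) = 1$, the sum again running over all $k$-subsets. Applying AM--GM to the ${n \choose k}$ nonnegative numbers $(\det\Phi_S)^2$ gives $\bigl(\prod_S(\det\Phi_S)^2\bigr)^{1/{n \choose k}} \le \frac{1}{{n \choose k}}\sum_S(\det\Phi_S)^2 = \frac{1}{{n \choose k}}$. Squaring the previous display and inserting this bound produces exactly
$$(\beta^\#\,\Delta_k(\Lambda_V))^2 = (\Delta^\#)^2\Bigl(\prod_S|\det\Phi_S|\Bigr)^{2/{n \choose k}} \le \frac{(\Delta^\#)^2}{{n \choose k}}.$$

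Equality is then read off directly: AM--GM is tight precisely when all $(\det\Phi_S)^2$ coincide, and since $\det\Lambda_S=(\det\Phi_S)^2\det\Lambda_V$ this is the same as all child lattices having equal determinants. There is no real analytic obstacle here; the one thing one must spot is that the orthonormality of $\Phi$ feeds through Cauchy--Binet to pin the sum of the squared $k\times k$ minors to $1$, which is exactly the normalization that lets the AM--GM step deliver the factor $1/{n \choose k}$. Everything else is bookkeeping with determinants and geometric means.
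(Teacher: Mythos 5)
Your proposal is correct and follows essentially the same route as the paper: the per-subset identity relating $\Delta_k(\Lambda_S)$ to $\Delta_k(\Lambda_V)$, a geometric mean over the $k$-subsets, AM--GM, and Cauchy--Binet to normalize the sum. The only cosmetic difference is that you apply Cauchy--Binet to $\Phi^t\Phi=I_k$ to get $\sum_S(\det\Phi_S)^2=1$, while the paper applies it to $(\Phi V)^t(\Phi V)$ to get $\sum_S\det\Lambda_S=\det\Lambda_V$; these are the same statement up to dividing by $\det\Lambda_V$.
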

\begin{proof}
The packing densities of the mother lattice $\Lambda_V$ and child lattice $\Lambda_S$ are related by the following identity
\begin{equation}
\Delta^2(\Lambda_V) \beta_S^2=\Delta^2(\Lambda_S) \frac{\det \Lambda_S}{\det \Lambda_V}.
\end{equation}
Compute the geometric mean of both sides over the collection of $k$-subsets $S$ to get
\begin{equation}
\Delta^2(\Lambda_V) (\beta^\#)^2=(\Delta^\#)^2 \left(\prod_S\frac{\det \Lambda_S}{\det \Lambda_V}\right)^\frac{1}{{n \choose k}}.
\end{equation}
From the arithmetic-geometric mean inequality it follows that
\begin{equation}
\Delta^2(\Lambda_V) (\beta^\#)^2 \leq (\Delta^\#)^2 \frac{1}{{n \choose k}}\left(\sum_S\frac{\det \Lambda_S}{\det \Lambda_V}\right)
\end{equation}
and equality holds if and only if $\det \Lambda_S$ is a constant with respect to $S$.
However 
\begin{eqnarray}
\sum_S \det \Lambda_S & = & \sum_S \det(G(\Lambda_S)) \nonumber \\
& =  & \sum_S \det ((\Phi_S V)^t) \det(\Phi_S V) \nonumber \\
& \stackrel{(a)}{=} & \det( (\Phi V)^t (\Phi V)) \nonumber \\
& = & \det \Lambda_V,
\end{eqnarray}
where in (a) we have used the Cauchy-Binet formula, see e.g.~\cite{horn2012matrix}. The remainder of the proof follows directly.
\end{proof}
\noindent
The following corollary is immediate.
\begin{corollary}
Given mother lattice $\Lambda_V$ and orthonormal basis $\Phi$, let $\beta_{min}$  be the minimum volume contraction ratio  of the child lattices $\Lambda_S$, taken  over all $k$-subsets of $\{1,2,\ldots,n\}$, and $\Delta^\#$ the geometric mean of the packing densities. Then
\begin{equation}
(\beta_{min} \Delta(\Lambda_V))^2 \leq \frac{(\Delta^\#)^2}{{n \choose k}} \leq \frac{\Delta_k(opt)^2}{{n \choose k}}.
\end{equation}
Equality holds in the left inequality iff all the contraction ratios are equal and all the child lattices have equal determinants. Equality holds in the right inequality iff  all child lattices achieve the optimal packing density in dimension $k$.

\end{corollary}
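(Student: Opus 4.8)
The plan is to obtain both inequalities by chaining the Determinant Bound with two elementary estimates on finite collections of positive reals: that the minimum is at most the geometric mean, and that the geometric mean is at most the maximum.

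First I would handle the left inequality. Since each $\beta_S>0$, the minimum contraction ratio satisfies $\beta_{min}=\min_S\beta_S\le\beta^\#$, the geometric mean. As $\Delta(\Lambda_V)\ge0$, this gives $0\le\beta_{min}\Delta(\Lambda_V)\le\beta^\#\Delta(\Lambda_V)$, so squaring and invoking the theorem yields
\[
(\beta_{min}\Delta(\Lambda_V))^2\le(\beta^\#\Delta(\Lambda_V))^2\le\frac{(\Delta^\#)^2}{{n\choose k}},
\]
which is the left half of the claim. For the right inequality I would use that $\Delta_k(opt)$ is, by definition, the largest packing density of any lattice in $\mathbb{R}^k$; since each child lattice $\Lambda_S$ is a rank-$k$ lattice, $\Delta(\Lambda_S)\le\Delta_k(opt)$ for every $S$. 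The geometric mean of numbers bounded by $\Delta_k(opt)$ is itself bounded by $\Delta_k(opt)$, so $\Delta^\#\le\Delta_k(opt)$, and squaring and dividing by ${n\choose k}$ finishes the right inequality.

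The remaining work is bookkeeping of equality. On the left, two sources of slack must both vanish: $\beta_{min}=\beta^\#$ forces the list $\{\beta_S\}$ to be constant (the minimum of positive reals equals their geometric mean only when all are equal), while the theorem's equality condition supplies that all $\det\Lambda_S$ are equal; hence equality holds exactly when both conditions hold, as stated. On the right, $\Delta^\#=\Delta_k(opt)$ forces every $\Delta(\Lambda_S)$ to attain the maximal value $\Delta_k(opt)$, i.e.\ every child lattice is optimal in dimension $k$. I do not expect a genuine obstacle here; the only point requiring care is keeping the two independent equality conditions for the left inequality separate and confirming that each is both necessary and sufficient.
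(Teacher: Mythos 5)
Your proposal is correct and is exactly the argument the paper intends: the paper labels the corollary ``immediate'' from the Determinant Bound, and the chain $\beta_{min}\le\beta^\#$ together with $\Delta^\#\le\Delta_k(opt)$, plus the corresponding equality conditions, is the standard way to see it. No gaps.
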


\subsection{Trace Upper Bound}
\begin{theorem}{(Trace Bound)}
For an $(n,k)$ code, the compaction ratio is bounded as
\begin{equation}
\beta_{min}^{2/k} \leq \overline{\beta_S^{2/k}}\leq \frac{k}{n}.
\label{eqn:thmtrace}
\end{equation}
Equality holds if  the shortest vector of each lattice $\Lambda_S$ is the image of the shortest vector in $\Lambda_V$.
\end{theorem}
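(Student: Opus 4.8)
The plan is to translate the packing radii $\rho(\Lambda_S)$ into shortest-vector lengths and then exploit a clean averaging identity over the $k$-subsets that comes directly from the orthonormality of $\Phi$. Recall that $\beta_S^{2/k}=(\rho(\Lambda_S)/\rho(\Lambda_V))^2$ and that for any lattice $2\rho(\Lambda)=\lambda_1(\Lambda)$, the length of its shortest nonzero vector. The left inequality $\beta_{min}^{2/k}\le\overline{\beta_S^{2/k}}$ is then immediate: since $x\mapsto x^{2/k}$ is increasing we have $\beta_{min}^{2/k}=\min_S\beta_S^{2/k}$, and the minimum of a finite set never exceeds its arithmetic mean.

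For the right inequality the key observation is the following averaging identity. Fix $u\in\mathbb{Z}^k$ and write $w=Vu$. Because $\Phi$ has orthonormal columns, $\|\Phi w\|^2=\|w\|^2$, while $\|\Phi_S V u\|^2=\sum_{i\in S}(\Phi w)_i^2$ is the squared norm of the projection of $\Phi w$ onto the coordinates indexed by $S$. Averaging over all $k$-subsets and noting that each coordinate index lies in exactly $\binom{n-1}{k-1}$ of them gives
\begin{equation}
\frac{1}{\binom{n}{k}}\sum_S\|\Phi_S V u\|^2=\frac{\binom{n-1}{k-1}}{\binom{n}{k}}\,\|\Phi w\|^2=\frac{k}{n}\,\|w\|^2 .
\end{equation}

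Next I would use the shortest vector of the mother lattice as a single test vector for every child lattice. Let $u^*$ realize $\lambda_1(\Lambda_V)$, so $\|V u^*\|=\lambda_1(\Lambda_V)=2\rho(\Lambda_V)$. For each $S$, feeding $u^*$ into $\Lambda_S$ gives $\lambda_1(\Lambda_S)\le\|\Phi_S V u^*\|$, whence $\rho(\Lambda_S)^2\le\frac14\|\Phi_S V u^*\|^2$ and therefore $\beta_S^{2/k}\le\|\Phi_S V u^*\|^2/\|V u^*\|^2$. Averaging this bound over $S$ and applying the identity above with $w=V u^*$ collapses the right-hand side to $k/n$, yielding the claimed bound. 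The equality statement then falls out of the only inequality used: equality forces $\lambda_1(\Lambda_S)=\|\Phi_S V u^*\|$ for every $S$, i.e.\ the shortest vector of each $\Lambda_S$ is the image of the shortest vector of $\Lambda_V$.

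The main obstacle is conceptual rather than computational: recognizing that one test vector, namely the shortest vector of $\Lambda_V$, can be injected simultaneously into all $\binom{n}{k}$ child lattices, so that the individual upper bounds on $\lambda_1(\Lambda_S)$ average neatly once the orthonormality of $\Phi$ is invoked. Once that idea is in hand, the combinatorial counting (each index appears in $\binom{n-1}{k-1}$ subsets) and the evaluation $\binom{n-1}{k-1}/\binom{n}{k}=k/n$ are routine.
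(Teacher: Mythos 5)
Your proof is correct and follows essentially the same route as the paper: the paper likewise sums $G(\Lambda_S)=V^t\Phi_S^t\Phi_S V$ over all $k$-subsets, uses $\sum_S\Phi_S^t\Phi_S=\binom{n-1}{k-1}\Phi^t\Phi=\binom{n-1}{k-1}I$ (your coordinate-counting identity), and tests each child lattice with a single integer vector $u$ that is then taken to realize the shortest vector of $\Lambda_V$, giving $\overline{\rho(\Lambda_S)^2}\le\frac{k}{n}\rho^2(\Lambda_V)$ with the same equality condition. The only cosmetic difference is that you fix the test vector $u^*$ up front while the paper keeps $u$ generic until the last line.
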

\begin{proof}
Upon summing over all $k$-subsets $S$ we obtain
\begin{eqnarray}
\sum_S G(\Lambda_S) & = & \sum_S V^t \Phi_S^t \Phi_S V \nonumber \\
& = & V^t \left( \sum_S \Phi_S^t \Phi_S \right) V \nonumber \\
& = & V^t {n-1 \choose k-1} \Phi^t \Phi V \nonumber \\
& = &  {n-1 \choose k-1} V^t V.
\label{eqn:GphiSum}
\end{eqnarray}
By definition the smallest packing radius of any child lattice ${\rho_{min}}$ satisfies 
\begin{equation}
{\rho_{min}}^2  \leq  \rho(\Lambda_S)^2 \leq (1/2) u^tG(\Lambda_S)u
\end{equation}
for any non-zero $u \in \mathbb{Z}^k$ and any $k$-subset $S$.
Upon averaging over subsets $S$ we obtain the upper bound
\begin{eqnarray}
\overline{\rho(\Lambda_S)^2}  & \leq    & \frac{1}{2{n \choose k}}\sum_S u^t G(\Lambda_S) u  \nonumber \\
& = & \frac{1}{2{n \choose k}} u^t \sum_SG(\Lambda_S) u  \nonumber \\
& = & \frac{{n-1 \choose k-1}}{2{n \choose k}}  u^t G(\Lambda_V) u.
\end{eqnarray}
Equality holds if  $\Phi_S V u$ is the shortest vector in $\Lambda_S$ for all $S$.
Thus
\begin{equation}
\overline{\rho(\Lambda_S)^2} \leq  \frac{k}{n} \rho^2(\Lambda_V)
\end{equation}
and (\ref{eqn:thmtrace}) follows immediately.
\end{proof}

\section{Analysis of  Some $(4,k)$ Codes}
\label{sec:fourdimcodes}
We construct $\Phi$ for $n=4$ for various values of $k$ and various mother lattices $\Lambda_V$. Numerical results for the $(4,k)$, $k=2,3$ are presented in Fig.~\ref{fig:plot4dim}, in which $\beta_{min}^{2/k}$ is plotted as a function of the packing density of the mother lattice. We have plotted the determinant bound using both the optimal and the cubic lattice for the child lattices. We have also plotted the trace bound. 

Observe that in the $(n,k)=(4,2)$ case there is a significant gap between the best possible construction and the upper bounds. In the $(4,3)$ case performance close to the  determinant bound is achieved by setting the mother lattice to be the cubic lattice. Also with the cubic lattice as the mother lattice, since the trace bound is lower than the determinant bound, this is proof that it is impossible to simultaneously achieve the packing density of $D_3$ when the mother lattice is the cubic lattice. 

  \begin{figure}[h] 
   \centering
   \includegraphics[width=8.0cm]{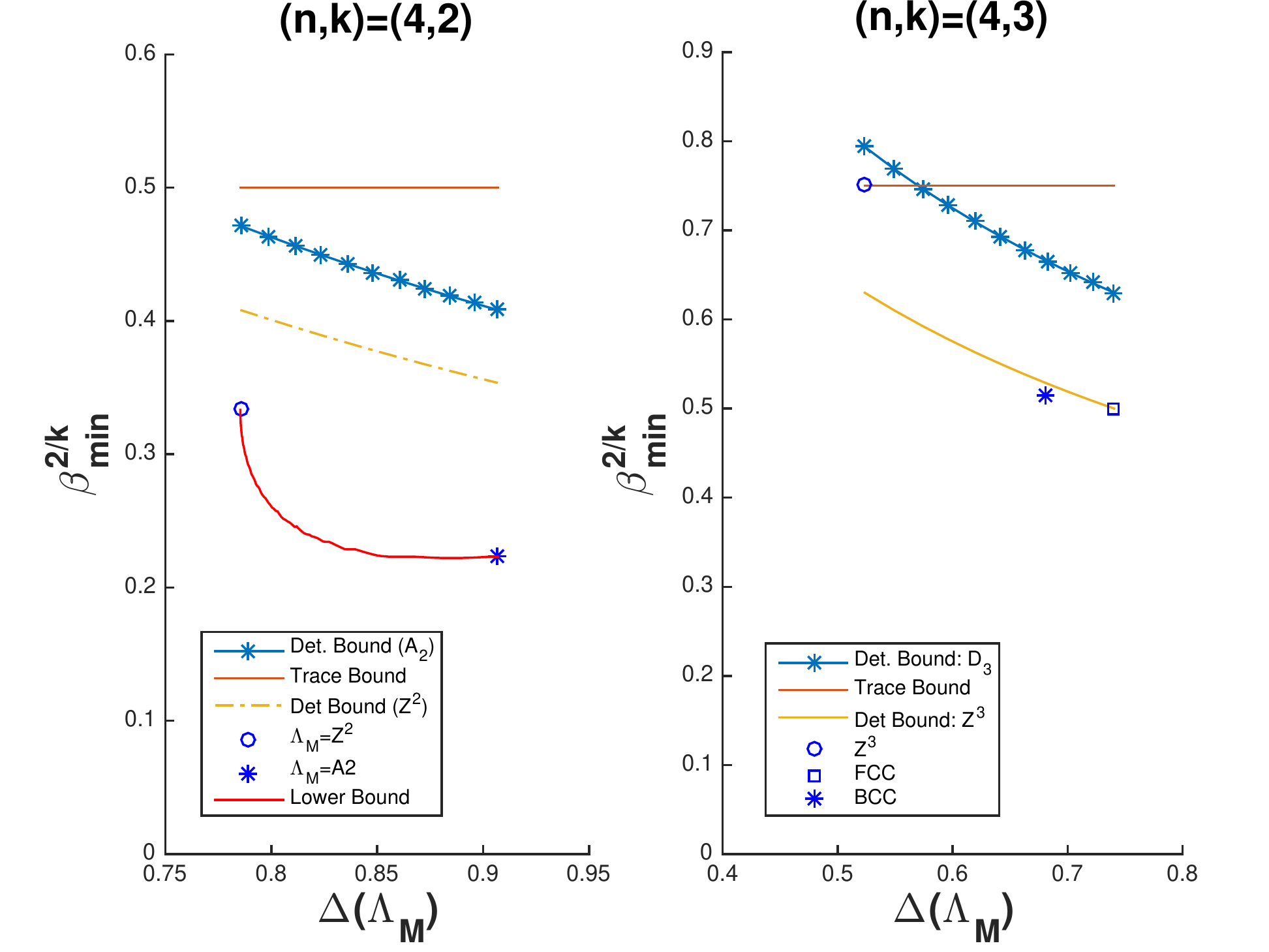} 
   \caption{Bounds on $\beta^{2/k}$ and values obtained from the construction.}
   \label{fig:plot4dim}
\end{figure}

\begin{figure}[htbp] 
   \centering
   \begin{tabular}{cc}
  \includegraphics[width=4.5cm]{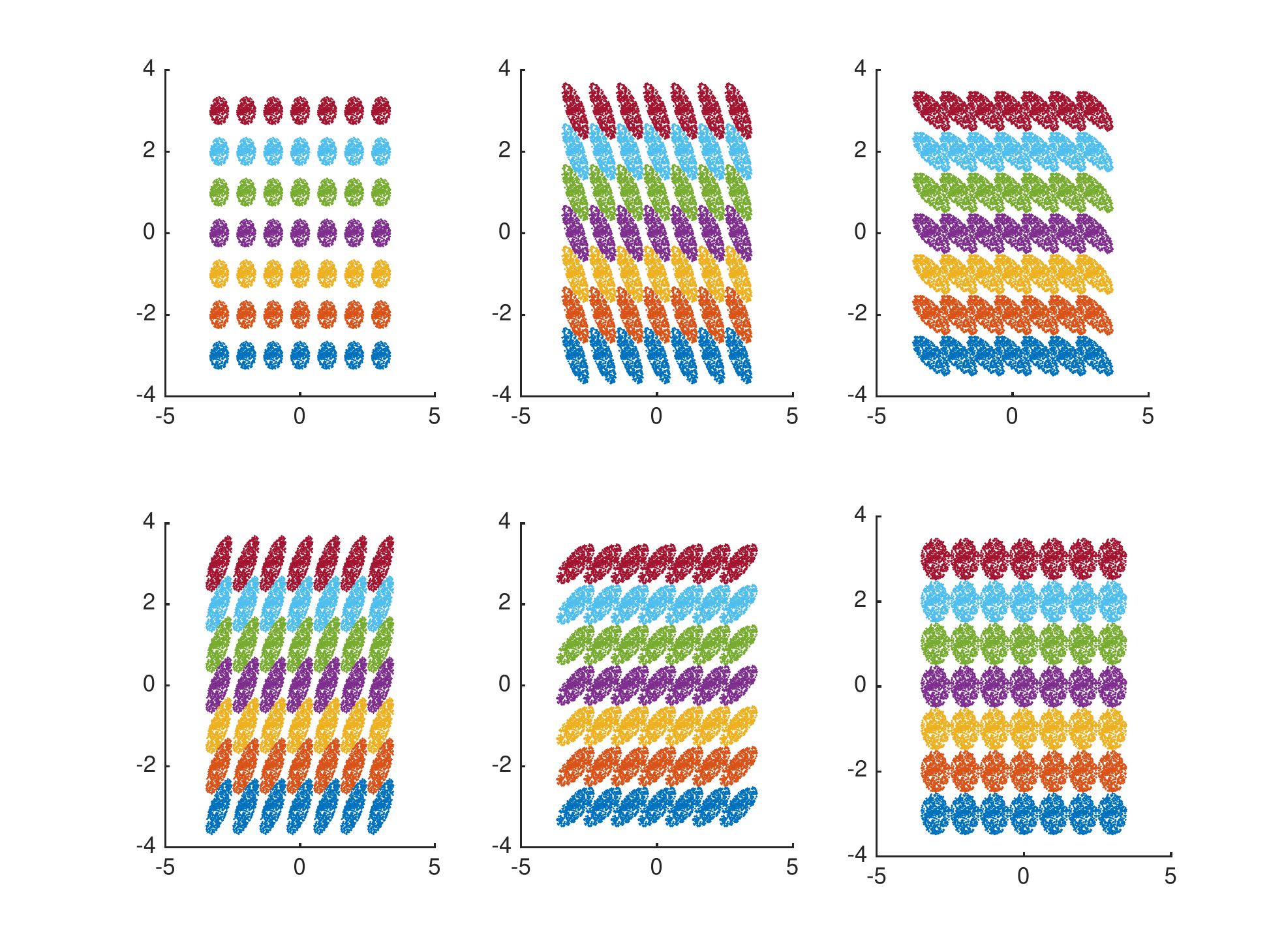}  & (a) \\
\includegraphics[width=4.5cm]{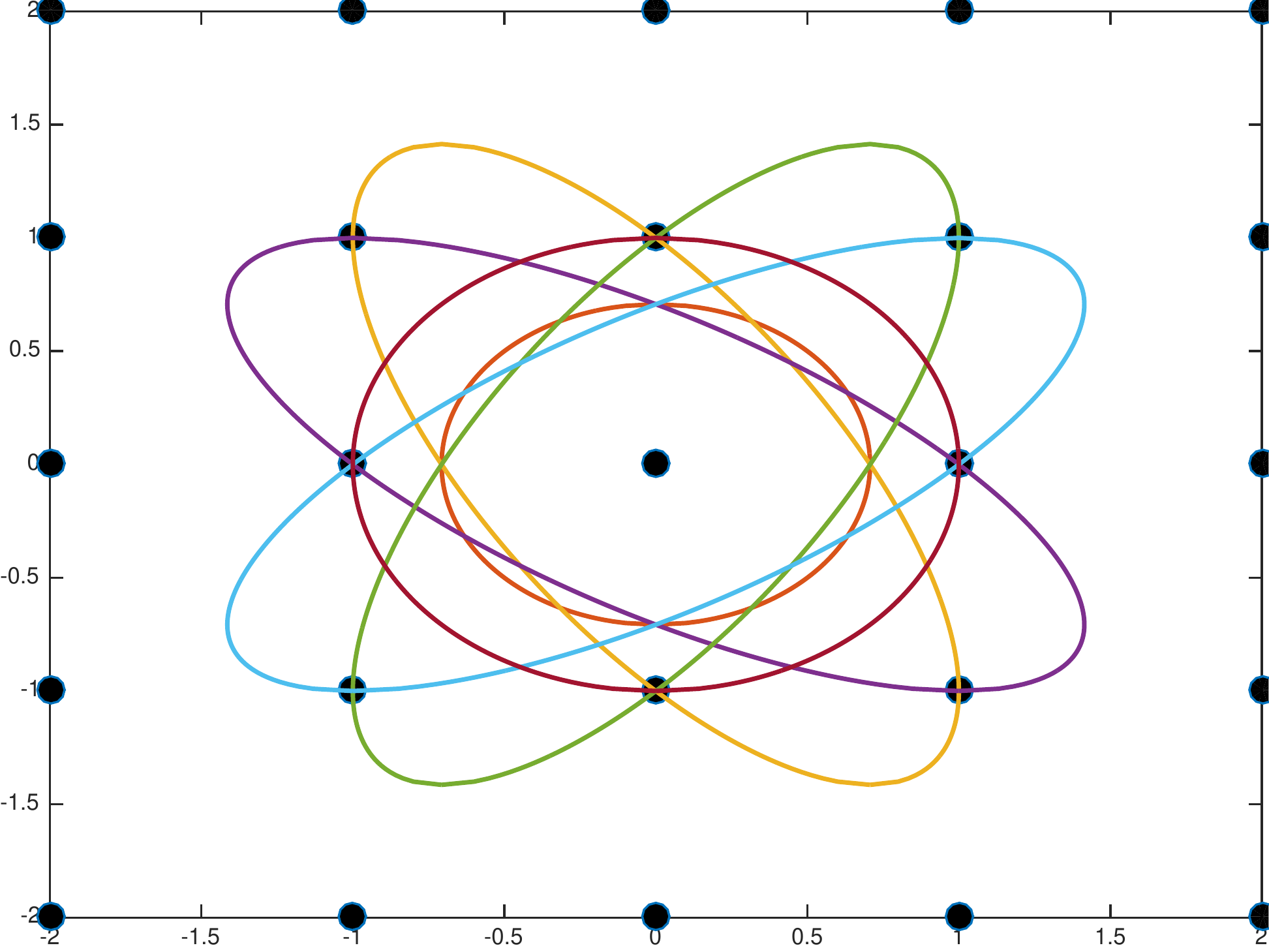} & (b) 
 \end{tabular}

  \caption{Packings (a) and the star body (b) derived from noise spheres in each of the six subspaces for the $(4,2)$ code with $\Lambda_V=\mathbb{Z}^2$ as described in the text.}
   \label{fig:admpack}
\end{figure}
Geometrically, the ability to decode correctly, post-erasure,  with iid Gaussian noise is determined by the largest noise sphere which can be packed by the projected (or child)  lattice $\Lambda_S$ in each of the ${n \choose k}$ subspaces ${\mathcal C}_S$. When each noise sphere is projected back onto the subspace spanned by the columns of $\Phi$, a noise sphere is transformed into an ellipsoid. To see this consider the noise sphere $\|x\|^2\leq r^2$ in subspace ${\mathcal C}_S$. Setting $x=\Phi_S y$, this leads to the noise ellipsoid $B_S(r)=\{y~:~\|\Phi_S y\|^2 \leq r^2\}$. The packing of the noise ellipsoids $B_S(\rho_{min})$, (recall that $\rho_{min}$ is the half the length of the shortest non-zero vector in $\Lambda_S$), by the mother lattice $\Lambda_V$ is shown in the six panels in Fig.~\ref{fig:admpack}(a), one for each subspace. Fig.~\ref{fig:admpack}(b) shows the star body  $B(2\rho_{min})=\bigcup_{s \in \mathcal S}B_S(2\rho_{min})$. This illustrates the star body, which is the union of the six ellipses, two of which are circles. Also shown are  points of the lattice $\Lambda_V$, which in this case is an admissible lattice for this star body and illustrates also the interpretation as a code design problem for the compound channel. Observe that $\mathbb{Z}^2$ is \emph{simultaneously} good as a packing for all six erasure configurations, i.e. for each of the bodies $B_S(\rho_{min})$. Also, $\Lambda_V$  is \emph{simultaneously} critical for five of the six bodies $B_S(2\rho_{min})$ (notice that one circle does not touch any of the lattice points).

\subsection{$(4,1)$}
Let $\Phi^t=\begin{pmatrix} a & a & a & a \end{pmatrix}$, $a=1/2$. We obtain $\beta_{min}^2=1/4$. The trace and determinant upper bounds yield $\beta^2 \leq 1/4$. Hence this construction is optimal.

\subsection{$(4,2)$}
With $\Lambda_V=\mathbb{Z}^2$, $a=1/\sqrt{3}$ and
$$\Phi^t=\left( \begin{array}{cccc} a & a & a & 0\\a &-a & 0 & a
 \end{array} \right)$$
 we obtain six child lattices with Gram matrices
 \begin{eqnarray}
 \begin{pmatrix} 2a^2 & 0 \\0 & 2a^2\end{pmatrix}, & 
 \begin{pmatrix} a^2 & a^2 \\a^2 & 2a^2\end{pmatrix}, & 
 \begin{pmatrix} 2a^2 & a^2 \\a^2 & a^2\end{pmatrix}, \nonumber \\
 \begin{pmatrix} a^2 & -a^2 \\-a^2 & 2a^2\end{pmatrix}, &
 \begin{pmatrix} 2a^2 & -a^2 \\-a^2 & a^2\end{pmatrix}, & 
 \begin{pmatrix} a^2 & 0 \\0 & a^2\end{pmatrix}.
 \end{eqnarray}
All six child lattices are similar to $\mathbb{Z}^2$. The first one has shortest vector of square length $2a^2$ and all the others have square length $a^2$.  This code achieves $\beta_{min}^{2/k}=\beta_{min}=1/3$, $\beta^\#=2^{1/6}/3=0.374$. 
 
The trace upper bound is  $\beta_{min} \leq \overline{\beta_S} \leq 1/2$, regardless of the mother lattice, while the determinant upper bound depends on the mother lattice, and is $\sqrt{2}/3$ and $1/\sqrt{6}=0.408$ for $\Lambda_V=A_2$ (hexagonal lattice) and $\mathbb{Z}^2$, resp. Thus the determinant bound is tighter than the trace bound but greater than $1/3$. This construction does not meet the trace bound or the determinant bound with equality. However with $\Lambda_V=\mathbb{Z}^2$, the following theorem shows this to be the best  $\beta_{min}$ possible. A computer-based search has also failed to reveal any improvements in $\beta^\#$.

\begin{theorem}
Let $\Phi$ be a $4 \times 2$ matrix with orthonormal columns, so that $\Phi^t \Phi=I$. Let $\Lambda_V=\mathbb{Z}^2$. 
Let $\Lambda_S=\Phi_S\Lambda_V$ where $S$ is a  2-subset  of $\{1,2,3,4\}$. Let  $r_S$ be the length of the shortest non-zero vector in $\Lambda_S$. Then $r_S^2 \leq 1/3$ for at least one of the six  2-subsets of $\{1,2,3,4\}$.
\end{theorem}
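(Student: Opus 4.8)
The plan is to pass from the matrix $\Phi$ to its four rows $\psi_m=(a_m,b_m)\in\mathbb{R}^2$, $m=1,\dots,4$. Since $\Phi$ has orthonormal columns, $\Phi^t\Phi=\sum_{m=1}^4\psi_m\psi_m^t=I_2$, i.e. $\sum_m a_m^2=\sum_m b_m^2=1$ and $\sum_m a_m b_m=0$. For $S=\{i,j\}$ one has $\Phi_S V=\Phi_S$ (as $V=I$), so $G(\Lambda_S)=\psi_i\psi_i^t+\psi_j\psi_j^t$ and
\[
r_S^2=\min_{0\ne u\in\mathbb{Z}^2}\big[(\psi_i\cdot u)^2+(\psi_j\cdot u)^2\big].
\]
Because each $r_S^2$ is a minimum, any fixed integer vector $u$ gives an upper bound, and I would test the four shortest primitive vectors $u\in\{(1,0),(0,1),(1,1),(1,-1)\}$. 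This yields $r_S^2\le\min(a_i^2+a_j^2,\ b_i^2+b_j^2,\ \gamma_S^+,\ \gamma_S^-)$, where $\gamma_S^\pm=(a_i\pm b_i)^2+(a_j\pm b_j)^2$. Since we only need \emph{some} $S$ with $r_S^2\le 1/3$, it suffices to show that these upper bounds cannot all exceed $1/3$; no larger integer vectors need ever be considered for this direction.

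A convenient encoding collapses the bookkeeping: set $z_m=(a_m+i b_m)^2=a_m^2-b_m^2+2i\,a_mb_m\in\mathbb{C}$ and $t_m=\|\psi_m\|^2=a_m^2+b_m^2=|z_m|$. The frame identity becomes $\sum_m t_m=2$ and $\sum_m z_m=0$, while $a_m^2=\tfrac12(t_m+\mathrm{Re}\,z_m)$ and $(a_m\pm b_m)^2=t_m\pm\mathrm{Im}\,z_m$. Hence assuming $r_S^2>1/3$ for all six $S$ forces, for every pair,
\[
|\mathrm{Re}(z_i+z_j)|<(t_i+t_j)-\tfrac23,\qquad |\mathrm{Im}(z_i+z_j)|<(t_i+t_j)-\tfrac13 .
\]
The problem reduces to showing this system is infeasible for four complex numbers of total modulus $2$ summing to $0$. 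Two structural facts drive the argument. First, complementary pairs satisfy $G(\Lambda_S)+G(\Lambda_{S^c})=I_2$, equivalently $z_i+z_j=-(z_k+z_l)$ and $t_i+t_j=2-(t_k+t_l)$; feeding this into the pair inequalities and optimizing over $\sigma=t_i+t_j$ already gives $|\mathrm{Re}(z_i+z_j)|<\tfrac13$ and $|\mathrm{Im}(z_i+z_j)|<\tfrac23$ for each of the three partitions, whence (using $2z_1=w_1+w_2+w_3$ for the three pair-sums $w_\ell$ through a common index) $|\mathrm{Re}\,z_m|<\tfrac12$ for every $m$. Second, $\sum_m z_m=0$ yields the interaction identity $\sum_{i<j}\mathrm{Re}(z_i\bar z_j)=-\tfrac12\sum_m t_m^2$, tying the cross terms of the $z_m$ to their moduli.

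The main obstacle is that naive global averaging is too weak: summing the squared pair inequalities over all six $S$ and substituting the interaction identity returns only $\sum_m t_m^2>1/3$, which is already implied by $\sum_m t_m=2$ (indeed $\sum_m t_m^2\ge 1$) and carries no contradiction. The contradiction must instead be extracted from the \emph{worst} partition, using the rigid coupling $|z_m|^2=(\mathrm{Re}\,z_m)^2+(\mathrm{Im}\,z_m)^2$, which is a genuine equality here rather than Cauchy--Schwarz slack; so the crux is a short case analysis on which partition carries the largest $t_i+t_j$ and which of the two (real- or imaginary-part) constraints is active there. The construction in the text, for which five children attain $r_S^2=1/3$ and one attains $2/3$, both certifies tightness and pinpoints the extremal configuration, so I expect the optimum of $\max_\Phi\min_S r_S^2$ to sit at a boundary point with five active constraints; the boundary structure guaranteed by Mahler's theorem (Thm.~VII) is the natural tool to show this symmetric configuration is forced and the value is exactly $1/3$.
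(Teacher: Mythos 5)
Your setup is sound and in places sharper than the paper's: restricting to the test vectors $(1,0),(0,1),(1,\pm1)$ is exactly what the paper does implicitly (it only ever considers $x_i^2$, $y_i^2$, $(x_i\pm y_i)^2$), and your complementary-pair identity $G(\Lambda_S)+G(\Lambda_{S^c})=I$ together with the complex-square encoding cleanly yields $|\mathrm{Re}(z_i+z_j)|<\tfrac13$, $|\mathrm{Im}(z_i+z_j)|<\tfrac23$ and hence $|\mathrm{Re}\,z_m|<\tfrac12$; I checked these derivations and they are correct. You also correctly diagnose that global averaging is vacuous. But the proof stops exactly where it has to start: the ``short case analysis on which partition carries the largest $t_i+t_j$'' is announced, not performed, and no contradiction is ever exhibited. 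What you have at the end is a feasible-looking system of strict inequalities on four complex numbers with $\sum z_m=0$, $\sum|z_m|=2$, and nothing yet rules it out. This case analysis is not a formality --- it is the entire content of the paper's proof, which runs for roughly a page: it first shows at most one index can have $x_i^2<r^2/2$ (and likewise for $y_i^2$, $(x_i\pm y_i)^2$), uses this to force a sign pattern across the four positions (one position with $x_i,y_i$ of opposite sign, one with the same sign, and a split pattern on the remaining two), and then closes with a two-case argument on the signs of $x_3,y_3$ that produces $y_2^2+y_3^2\geq 2r^2$ and hence $r^2\leq 1-2r^2$, i.e.\ $r^2\leq 1/3$.

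The appeal to Mahler's Theorem VII cannot substitute for this. That theorem guarantees that a critical lattice for a bounded star body has linearly independent points on the boundary; it does not identify the extremal $\Phi$, does not show the symmetric five-active-constraints configuration is forced, and does not compute the value $1/3$. Moreover the star body here depends on $\Phi$ itself (the ellipsoids are $\{y:\|\Phi_S y\|^2\leq r^2\}$), so you are optimizing over bodies as well as lattices, which is outside the statement of that theorem. The exhibited construction with five children at $r_S^2=1/3$ certifies tightness of the bound but is not evidence for its validity. To complete the argument you would need to actually run the infeasibility analysis on your system of inequalities --- which, given the equality $|z_m|^2=(\mathrm{Re}\,z_m)^2+(\mathrm{Im}\,z_m)^2$ you flag, may well go through and could be tidier than the paper's coordinate chase, but as written the decisive step is missing.
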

\begin{proof}
For  this proof we will write
\begin{equation}
\Phi^t= \begin{pmatrix} 
x_1 & x_2 & x_3 & x_4 \\y_1 & y_2 & y_3 & y_4
\end{pmatrix}.
\end{equation}
Our proof is by contradiction. Suppose that the shortest non-zero vector  in all $\Lambda_S$ has square length $r^2>1/3$. Now $x_i^2 < r^2/2$, in at most one position $i$, else the shortest would be smaller than $r^2$. The same is true for $y_i^2$, $(x_i-y_i)^2$ and $(x_i+y_i)^2$. Hence there exists one position $i$, w.l.o.g. $i=1$, such that $x_i^2 \geq r^2/2$ , $y_i^2  \geq r^2/2$  and $ (x_i-y_i)^2 \geq r^2/2$. It follows that (i) $x_1$ and $y_1$ must be of opposite sign and (ii) $(x_1+y_1)^2 <  r^2/2$. (i) is true because  if $x_1$, $y_1$ are of the same sign and  $ (x_1-y_1)^2 \geq r^2/2$ then $|y_1| \geq \sqrt{2} r$ or $ |x_1| \geq  \sqrt{2} r$. Assuming $x_1 \geq \sqrt{2} r$, we have $r^2 \leq  x_3^2+x_4^2 =1-x_1^2-x_2^2  \leq  1 - 2r^2$ which contradicts the hypothesis that $r^2 > 1/3$. (ii) is true for if not $|x_1| \geq  \sqrt{2} r$ or $|y_1| \geq \sqrt{2} r$, and by the same argument as in (i) $r^2$ cannot exceed $1/3$.

Since $(x_1+y_1)^2 < r^2/2$, it follows that $(x_i+y_i)^2 \geq  r^2/2$ for positions $i=2,3,4$. In at least one of these positions, say $i=2$, $x_i^2  \geq  r^2/2$ and $y_i^2  \geq  r^2/2$. Now $x_2$ and $y_2$ must be of the same sign and $(x_2-y_2)^2 <  r^2/2$, by a proof similar to that used before. Thus for $i=3,4$, $(x_i+ y_i)^2 \geq  r^2/2$ and $(x_i- y_i)^2 \geq  r^2/2$. Again for $i=3,4$, both $x_i^2 \geq  r^2/2$ and $y_i^2 \geq  r^2/2$ cannot hold for the same $i$, hence, either $x_3^2 <  r^2/2,~y_3^2 \geq  r^2/2$ and $x_4^2 \geq r^2/2,~y_4^2 <  r^2/2$ or $x_4^2 <  r^2/2,~y_4^2 \geq  r^2/2$ and $x_3^2 \geq r^2/2,~y_3^2 <  r^2/2$. We assume the first case. The proof for the other case is similar.

We have already proved that $x_2$ and $y_2$ are of the same sign.  Assume they are both positive (if not reverse signs of all elements of $\Phi$). Further, assume that $y_2 > x_2$ and consider positions $i=2,3$ (if $x_2 > y_2$, then the same proof applies but for positions $i=2,4$).  We now break up the proof into two cases: 

\flushleft
{\bf Case 1}:  ({\bf $y_3$ and $x_3$ of the same sign}): Either (a) $(y_3-x_3)^2 \geq  r^2$ or (b) $(y_3-x_3)^2 < r^2$.  If (a) 
then $|y_3|-|x_3| \geq  r$ and $y_2^2+y_3^2  \geq x_2^2 + (|x_3|+r)^2 \geq $ $x_2^2+x_3^2+r^2 \geq 2r^2$. Thus $r^2 \leq y_1^2+y_4^2=$ $1-(y_2^2+y_3^3)$ $ \leq 1-2r^2$, hence $r^2 \leq 1/3$. 

If (b) then let $(y_3-x_3)^2 =r^2-\epsilon^2$ ($0< \epsilon^2 \leq r^2/2$). Thus $|y_3|-|x_3|=\sqrt{r^2-\epsilon^2}$ and it follows that $y_3^2 \geq x_3^2 + r^2-\epsilon^2$. Since $(y_2-x_2)^2+(y_3-x_3)^2 \geq r^2$ it follows that $(y_2-x_2)^2 \geq \epsilon^2$ and that $|y_2 | \geq  |x_2|+\epsilon$ and thus $y_2^2  \geq  x_2^2+\epsilon^2$. Thus $y_2^2+y_3^2 \geq x_2^2 +\epsilon^2 +x_3^2 +r^2 -\epsilon^2$ $ =x_2^2+x_3^2 + r^2 \geq 2r^2$. But then $r^2 \leq y_1^2+y_4^2=1-(y_2^2+y_3^2) \leq 1-2r^2$ and it follows that  $r^2 \leq 1/3$.
\flushleft
{\bf Case 2}: ({\bf$y_3$ and $x_3$ are of opposite signs}): Either (a) $(y_3+x_3)^2 \geq r^2$ or (b) $(y_3+x_3)^2  < r^2$. If (a) then $|y_3+x_3| \geq  r$ and due to opposite signs $|y_3|-|x_3| \geq r$ from which $y_3^2 \geq x_3^2+r^2$. It follows that $y_2^2+y_3^2 \geq x_2^2+x_3^2 +r^2$ $ \geq 2r^2$. This implies $r^2 \leq 1/3$. If (b) then let $(y_3+x_3)^2 =r^2-\epsilon^2$. Since $y_3$ and $x_3$ have opposite signs $(|y_3|-|x_3|)^2 =r^2-\epsilon^2$ and thus $y_3^2 \geq x_3^2 +r^2-\epsilon^2$. Since $(y_2+x_2)^2 + (y_3+x_3)^2 \geq r^2$ it follows that $(y_2+x_2)^2+r^2-\epsilon^2 \geq r^2$ and hence $(y_2+x_2)^2 \geq \epsilon^2$ which implies that $y_2^2 \geq x_2^2 +\epsilon^2$. Thus $y_2^2+y_3^2 \geq x_2^2+\epsilon^2 + x_3^2 +r^2-\epsilon^2 \geq 2r^2$. Once again this means $r^2 \leq 1/3$.

\end{proof}

\subsection{$(4,3)$}
It is checked by direct evaluation that for $a=1/{2}$, $\Lambda_V=\mathbb{Z}^3$ and
\begin{equation}
\Phi^t=\begin{pmatrix}  a  & -a & -a & -a \\a & a & -a & a \\a  & -a & a & a
\end{pmatrix},
\end{equation}
the four child lattices have Gram matrices
\begin{eqnarray}
\begin{array}{cc}
\begin{pmatrix} 
3/4       &      1/4       &      1/4     \\
       1/4   &          3/4     &       -1/4     \\
       1/4      &      -1/4      &       3/4   
\end{pmatrix}, & 
\begin{pmatrix}
 3/4      &      -1/4      &       1/4     \\
      -1/4      &       3/4    &         1/4     \\
       1/4      &       1/4      &       3/4    
\end{pmatrix} \nonumber \\
\begin{pmatrix}
3/4        &     1/4       &     -1/4     \\
       1/4    &         3/4     &        1/4     \\
      -1/4    &         1/4       &      3/4  
 \end{pmatrix} & 
\begin{pmatrix}
 3/4      &      -1/4      &     -1/4    \\ 
      -1/4     &       3/4     &      -1/4     \\
      -1/4       &    -1/4       &     3/4   
\end{pmatrix}.
\end{array}
\end{eqnarray}
All child lattices are congruent to the body-centered cubic lattice with packing density $\pi \sqrt{3}/8$. This construction achieves the trace bound $r^2=3/4$ and is optimal. 

Set $\Lambda_V$ to be the face-centered cubic lattice, the densest lattice packing in $\mathbb{R}^3$, packing density  $\Delta_V=\pi/\sqrt{18}=0.7408$, so that
$\det \Lambda_V = 4$,
\begin{equation}
\begin{array}{cc}
V=\begin{pmatrix} -1 & 1 & 0 \\-1 & -1 & 1 \\ 0 & 0 &  -1\end{pmatrix},  & G(\Lambda_V)=
\begin{pmatrix} 
2 & 0 & 1\\0 & 2 & 1\\1 & 1 & 2
\end{pmatrix}
\end{array}.
\end{equation}
With $\Phi$ such that 
 \begin{equation}
 \Phi*V=\begin{pmatrix} 1 &  0  & 0 \\ 2/3 & 1 & 1\\ -2/3  & 1 & 0\\ 1/3 & 0 & 1 \end{pmatrix},
 \end{equation}
we get $\beta^{2/k}=(\rho_{min}/\rho(\Lambda_V))^2=0.5$. 
Each child lattice has unit determinant and shortest vector of length equal to unity.
All four child lattices have an identical packing density of $\pi/6$, which is the packing density of the cubic lattice $\mathbb{Z}^3$.
Also observe that the determinant bound (\ref{eqn:DetBoundTh}) holds with equality, which implies that the selected $\Phi$ is optimal among rotations for which all child lattices achieve the packing density of the cubic lattice.

Let $\Lambda_V$ the body-centered cubic lattice, $\Delta(\Lambda_V)=\pi \sqrt{3}/8$, $\rho^2=3/4$,
$$V=   \begin{pmatrix} 1 &     -1 &    1 \\
    -1  &    1  &   1 \\
    -1  &  -1   &  1 \end{pmatrix}, G(\Lambda_V)= \begin{pmatrix} 3 & -1 & -1\\-1 & 3 & -1\\-1 & -1 & 3 \end{pmatrix}.
    $$
Upon setting $a=1/2$, $b=a+\sqrt{a}$, $c=a-\sqrt{a}$ and 
    $$\Phi=\begin{pmatrix}
     \sqrt{a} &   a    &     0 \\
         0  &   a   &  -\sqrt{a} \\
         0  &   a  &   \sqrt{a} \\
    \sqrt{a}  &   -a   &       0
    \end{pmatrix},
    \Phi*V=\begin{pmatrix}-c & c & b\\-c & b & c\\-b & c & b\\b & -b & -c \end{pmatrix}
    $$
 we obtain child lattices with Gram matrices given in terms of $d=2c^2+b^2$, $e=2b^2+c^2$, $f=-c^2-2bc$, $g=c^2+2bc$, $h=3bc$, $i=-b^2-2bc$, 
    \begin{eqnarray}
    \begin{pmatrix} d & f & -g \\ f & d & h\\-g & h & e \end{pmatrix}, & 
     \begin{pmatrix} d & -g & f \\ -g & e & -h\\f & -h & d \end{pmatrix}, \nonumber \\
       \begin{pmatrix} e & -g & i \\ -g & d & -h\\i & -h & e \end{pmatrix}, & 
        \begin{pmatrix} e & -i & -g \\ -i & e & -h\\-g & -h & d \end{pmatrix}.
    \end{eqnarray}
  Each child lattice has determinant $4$ and shortest vector with square length $d= (9/4-1/\sqrt{2})=1.5429$ and packing density  $\Delta=\pi d^{3/2} /12=0.5017$. This    results in  $\beta^{2/k}=d/3=0.5143$. 
 
 \section{Summary}
 \label{sec:summary}
We considered  the design of power-constrained rank-$k$ lattice codes in $\mathbb{R}^n$, with the property that error free recovery is possible for any $k$ code symbols and the minimum distance for each of the ${n \choose k}$ lattices obtained by projection onto $k$-dim subspaces spanned by $k$ coordinate vectors are bounded from below.  The potential application is to erasure channels with additive noise. Bounds on the performance are derived and the performance of specific constructions are investigated for $n=4$.

\bibliography{../../../erasures,../../../broadcast} 
\bibliographystyle{abbrv}

\end{document}